\newtheorem{theorem}{Theorem}
\newtheorem{lemma}[theorem]{Lemma}
\begin{document}
\title{Performance of RIS-empowered NOMA-based D2D Communication under Nakagami-$m$ Fading} 
\author{\IEEEauthorblockN{Mohd Hamza Naim Shaikh,
$^{\circ}$Sultangali Arzykulov, 
$^{\circ}$Abdulkadir Celik, 
$^{\circ}$Ahmed M. Eltawil,
and G. Nauryzbayev} 
\IEEEauthorblockA{School of Engineering and Digital Sciences, Nazarbayev University, Nur-Sultan City, 010000, Kazakhstan\\
$^{\circ}$CEMSE Division,
King Abdullah University of Science and Technology, Thuwal, 23955, Saudi Arabia
\\
Emails: \{hamza.shaikh, galymzhan.nauryzbayev\}@nu.edu.kz,\\
$^{\circ}$\{sultangali.arzykulov, abdulkadir.celik, ahmed.eltawil\}@kaust.edu.sa
}
\vspace{-0.7cm}} 
\maketitle

\begin{abstract}
Reconfigurable intelligent surfaces (RISs) have sparked a renewed interest in the research community envisioning future wireless communication networks. In this study, we analyzed the performance of RIS-enabled non-orthogonal multiple access (NOMA) based device-to-device (D2D) wireless communication system, where the RIS is partitioned to serve a pair of D2D users. Specifically, closed-form expressions are derived for the upper and lower limits of spectral efficiency (SE) and energy efficiency (EE). In addition, the performance of the proposed NOMA-based system is also compared with its orthogonal counterpart. Extensive simulation is done to corroborate the analytical findings. The results demonstrate that RIS highly enhances the performance of a NOMA-based D2D network.

\emph{Keywords}--- Device-to-device, energy efficiency, non-orthogonal multiple access, RIS, spectral efficiency.
\end{abstract}

\section{Introduction}
Reconfigurable intelligent surfaces (RISs) have recently been emerged as a revolutionary technique to realize the smart and programmable wireless environment for the next generation 6G systems \cite{basar2019wireless}. 
Inherently, RIS consists of a planar array of large number of passive reflecting elements (REs). These REs can reflect the incident signal so that the reflected signal can be aligned towards the desired location.
Because of their potential to transform a hostile wireless environment into an adaptive and favorable propagating channel, RISs have received much attention from the research community. RIS has the potential to enhance spectral efficiency (SE) significantly, and energy efficiency (EE) due to the large number of passive REs \cite{8319526}.

On the other hand, device-to-device (D2D) communication is also considered a promising technology proposed in the 5G standard that enables direct communications between D2D users. In D2D communication, the same time-frequency resources of cellular users are re-utilized by the D2D users, thus, allowing massive access without aggravating the spectrum crunch \cite{6805125}. 
However, in D2D communication, a successful transmission is highly reliant on the propagation environment due to the limited power budget available at nodes. Unfortunately, this restriction limits the applicability of D2D in many of the existing scenarios, especially in dense urban environments. Since RIS can adapt an unknown channel to a favorable propagation environment, deploying RISs can effectively alleviate this constraint \cite{9549117}. 
In \cite{9519722}, the authors have optimized the RIS-aided underlay D2D communication to maximize the capacity by optimizing RIS phase shifts along with spectrum reuse and transmit beamforming. A joint resource allocation to maximize the sum rate of a RIS-assisted D2D underlay cellular network was studied in \cite{9507508}. Likewise, in \cite{9424402}, performance analysis for RIS-assisted D2D communication was carried out for underlay and overlay modes.

This paper investigates the performance of a RIS-empowered NOMA-based D2D communication system. The proposed scenario considers a downlink network, where a user nearby the base station (BS), is utilized to serve as a D2D transmitter (DT), facilitating the communication with a pair of users, i.e., D2D receivers (DRs), which were otherwise not accessible by BS. DT is deployed with RIS, which comprises $M$ REs. To support both DRs, a hard partitioning-based approach is utilized at RIS. 
Unlike \cite{9424402}, where the point-to-point D2D communication without a direct link was considered, we consider the novel RIS-empowered NOMA-based D2D communication with both direct and reflected links.  
This work's main contribution can be summarized as follows: 
\begin{itemize}
\item We obtain the closed-form expressions for the upper and lower bounds of ergodic rate for the NOMA pair of the proposed RIS-enabled NOMA-based D2D communication system. Initially, we formulate the received signal-to-interference-plus-noise-ratio (SINR) and then utilize it for deriving the closed-form expressions of SE and EE for both the DRs;
\item In addition, we illustrate the effect of the distribution of REs, the power allocation ratio, and the Nakagami-$m$ fading parameters on network performance;
\item Lastly, the proposed RIS-enabled NOMA-based D2D network is compared to its corresponding OMA counterpart and the case without RIS. 
\end{itemize}

\begin{figure}[!t]
\centering
\includegraphics[width=0.7\columnwidth]{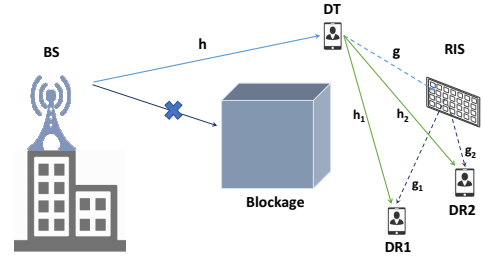} 
\caption{Schematic for RIS-empowered D2D Communication.}
\label{fig:Blkdig}
\vspace{-0.5cm}
\end{figure}

\section{System Model}
Fig. \ref{fig:Blkdig} illustrates the system model where a single antenna BS tries to communicate with a pair of blocked UEs, i.e., D2D receivers (DRs), denoted as DR1 and DR2. So, a D2D transmitter (DT) is utilized to set a reliable communication link. Further, the DT-to-DR transmission is assisted via RIS with $M$ number of REs. The system model can be regarded as a connected D2D-enabled cellular system, where a cluster of UEs are brought into coverage through the nearest connected UE\footnote{There can be multiple users within the cluster, however, due to complexity requirements, we restrict ourselves to the two-user case, i.e., two DRs \cite{9345507}.}. Further, without losing any generality, in this work, we focus mainly on D2D communication, i.e., communication from DT to DRs.

In order to support two DRs, RIS is partitioned in two sub-RISs, each having $M_1$ and $M_2$ number of REs, with $M_1=\eta\,M$, $M_2=\left(1-\eta\right)\,M$, $M_1+M_2=M$ and $\eta$ being the allocation parameter. 
Similar to \cite{9240028, 9000593}, a quasi-static and flat fading channel is assumed with known channel state information (CSI). Further, the BS-to-DT, DT-to-RIS and RIS-to-DR channel links can either be line-of-sight (LoS), or non-LoS (NLoS) and thus characterized through Nakagami-$m$ fading model \cite{9345507}. The elements of $\textbf{g}$, $\textbf{g}_1$ and $\textbf{g}_2$ follow the Nakagami-$m$ fading model with $m_0$, $m_1$ and $m_2$ as the fading parameters.  
Similarly, the direct link between DT-to-DR is also characterized through Nakagami-$m$ fading channel with $m_{h_l}$ as fading parameter, where $l = {1, 2}$.

In accordance with the NOMA and RIS concepts, the received signals at DR1, $r_1$, and DR2, $r_2$, can be expressed as
\begin{align}
    r_1 &= \left(h_1 + \mathbf{\bar{g}}_1\:\mathbf{\Phi}_1 \:\mathbf{g}_1\,\right)\,(\beta _1 \,x_1 + \beta _2 \,x_2)\,\sqrt{P_r} + N_o, \\
    r_2 &= \left(h_2 + \mathbf{\bar{g}}_2\:\mathbf{\Phi}_2 \:\mathbf{g}_2\,\right)\, (\beta_1\, x_1 + \beta_2\, x_2)\,\sqrt{P_r} + N_o,
\end{align}
where $x_1$ and $x_2$ represent the desired DR1 and DR2 signals, respectively. Likewise, $\beta_1$ and $\beta_2$ is the power allocation coefficient for DR1 and DR2. Further, $\beta_1$ and $\beta_2$ adhere to the NOMA constraint, i.e., $\beta^2_1\; + \;\beta^2_2 = 1$. Further, $P_r$ and $N_o$ denote the transmit power at DT and the additive white Gaussian noise (AWGN), with $N_o \in \mathcal{CN}(0,\sigma^2)$.

Now the received signal of the DRs can be maximized through proper phase shifting at the RIS. Mathematically, $ \left |{\mathbf{g} \mathbf{\Phi}_1 \mathbf{g}_1}\right | = \left | \sum _{i=1}^{M_1} { g^i \delta^i e^{j\theta_i}\,g_1^{i}}\right|$ and $ \left |{\mathbf{g} \mathbf{\Phi}_2 \mathbf{g}_2}\right | = \left | \sum _{i=1}^{M_2} { g^i \delta^i\,e^{j\theta_i} g_2^{i} }\right|$ maximizes the received signal power at DRs. Here, $g^i$, $g_1^i$ and $g_2^i$ denotes the $i$-th element of $\mathbf{g}$, $\mathbf{g}_1$ and $\mathbf{g}_2$, respectively. Thus, re-configuring $\theta^i$ to $\tilde{\theta}$ maximize the received power. The corresponding channel gain\footnote{Without losing any generality, $\delta^i = 1, \forall\, i$ is assumed.} to $\tilde{\theta}$ can be expressed as
{\allowdisplaybreaks
\begin{align}
\left |\mathcal{H}_1\right|^2 & = \left |h_1 + {\mathbf{\bar{g}}_1 \mathbf{\Phi}_1 \mathbf{g}_1} \right|^2 = \left(\left|h_1 \right| + \sum _{i=1}^{M_1} \left|\bar{g}_1^i\right| \left|g_1^i\right| \right)^2,
\label{ch_G1} 
 \\
\left |\mathcal{H}_2\right|^2 & = \left |h_2 + {\mathbf{\bar{g}}_2 \mathbf{\Phi}_2 \mathbf{g}_2} \right|^2 = \left(\left|h_2\right| + \sum _{i=1}^{M_2} \left|\bar{g}_2^i\right| \left|g_2^i\right| \right)^2.
\label{ch_G2}
\end{align}}

\section{Performance Analysis}
This section evaluates the bound on the ergodic rate of DRs. Further, the SE and EE for RIS-enabled NOMA-based D2D is formulated considering the fading parameter, power allocation, and REs distribution. 
Initially, the SINR for both the DRs is formulated and later on we utilize it in evaluating the SE and EE.

Considering the DR1 signal as an interference, DR2 will decode the received signal with the following SINR
\begin{align}
   SINR_{DR2} & =  \frac{\left |\mathcal{H}_2\right|^2 \beta _2^2 \,P_r} {\left |\mathcal{H}_2\right|^2 \beta _1^2 \,P_r + N_o} = \frac{\left |\mathcal{H}_2\right|^2 \beta _2^2 \,\rho_r} {\left |\mathcal{H}_2\right|^2 \beta _1^2 \,\rho_r + 1},
   \label{snrr_dr2}
\end{align}
where $\rho_r=P_r/N_o$ is transmit SNR at DT.

Likewise, at DR1, applying SIC, initially, DR1 will decode the received signal of DR2. SINR for it can be expressed as
\begin{equation}
   SINR_{DR1 \to DR2} =  \frac{\left |\mathcal{H}_1\right|^2 \beta _2^2 \,\rho_r} {\left |\mathcal{H}_1\right|^2 \beta _1^2\, \rho_r + \sigma^2}.
   \label{snr_dr1}
\end{equation}
After decoding and canceling the signal of DR2, DR1 can decode its own signal with SINR of
\begin{equation}
   SINR_{DR1} = {\beta _1^2\,\left |\mathcal{H}_1\right|^2  \,\rho_r}.
   \label{snr_dr11}
\end{equation}

\subsection{Channel Characterization}
Now the channel gains, $\mathcal{H}_1$ and $\mathcal{H}_2$, as defined in \eqref{ch_G1} and \eqref{ch_G2}, respectively, do not conform to any known closed-form distribution. Thus, for the sake of simplification of the analytical performance, we can approximate $G_1$ and $G_2$ (for $M_1 \gg 1$ and $M_2 \gg 1$) as $\left |\mathcal{H}_1\right|^2 = \left(\sum _{i=1}^{M_1} \left|\bar{g}_1^i\right| \left|g_1^i\right| \right)^2$ and $\left |\mathcal{H}_2\right|^2 = \left(\sum _{i=1}^{M_2} \left|\bar{g}_2^i\right| \left|g_2^i\right| \right)^2$, respectively.   
The distribution function for the channel gain, $\left |\mathcal{H}\right|^2$, can be defined for $ g\geq 0$ as \cite{samuh2020performance} 
\begin{align} 
f_{\left |\mathcal{H}\right|^2}(y) & = \frac {\sqrt{y^{a}}}{2b^{a+1}\Gamma (a+1)} \exp \left ({-\frac {\sqrt{y}}{b}}\right), 
\label{pdf1} \\
F_{\left |\mathcal{H}\right|^2}(y) & = \frac {\gamma \left ({a+1,\frac {\sqrt{y}}{b}}\right)}{\Gamma \left ({a+1}\right)\sqrt{y}}.
\label{cdf1}
\end{align}
Here, $a$ and $b$ are the variables defined as $a=\frac{m_0 \Gamma(m_0)^2 M m_{l} \Gamma(m_{l})^2}{m_0 \Gamma(m_0)^2 m_{l}\Gamma(m_{l})^2 - \Gamma(m_0+\frac{1}{2})^2 \Gamma(m_{l}+\frac{1}{2})^2}-N-1$ and $b=\frac{m_0 \Gamma(m_0)^2 m_{l} \Gamma(m_{l})^2 - \Gamma(m_0+\frac{1}{2})^2 \Gamma(m_{l}+\frac{1}{2})^2}{\sqrt{\frac{m_0}{\Omega_{g}}} \Gamma(m_0)\Gamma(m_0+\frac{1}{2}) \sqrt{\frac{m_{l}}{\Omega_{g_l}}}\Gamma(m_{l})\Gamma(m_{l}+\frac{1}{2})}$, with $N \in \{M_1, M_2\}$ and $l\in\{1, 2\}$, for $\mathcal{H} \in \{\mathcal{H}_1, \mathcal{H}_2\}$. Further, $\Gamma(\cdot)$ represents the Gamma function and $\gamma(\cdot,\cdot)$ indicates the lower incomplete Gamma function.

\begin{figure*}[!t]
\begin{align} 
R^{l}_{DR1} =&\,\frac {1}{\ln{(2)}\,\Gamma (a+1)}  \left[  \frac {\pi\, \mathrm {csc} \left({a\pi}/{2}\right)\, \mathcal{F}_1}{\left(a+2\right) \left(\beta_1 b\right)^{a+2}\,\left(\rho_{r}\right)^ {\frac {a}{2}+1}}      +\frac {\pi\,  \mathrm {sec}\left ({{a\pi}/{2}}\right)\,\mathcal{F}_1 } {\left(a+1\right)\, \left(\beta_1 b\right)^{a+1} \left(\rho_{r}\right)^ {\frac {a+1}{2}}} \right. \nonumber\\
&\left.\qquad + 2\,a\,\left(a-1\right) \,\psi ^{(0)} \left(a+1\right) + \,\left(a^2-a\right)\,\Gamma(a-1)\,\ln\left( {b^{2}\, \beta_1^2\,\rho_{r}}\right) + \Gamma (a-1) \, \mathcal{F}_3  \right] \label{e14}
\end{align} \vspace{-0.3cm}
\hrule
\vspace{-0.3cm}
\end{figure*}
\begin{figure*}[!t]
\begin{align} 
R^{l}_{DR2} = &\,\frac {1}{\ln{(2)}\Gamma (a+1)}  \left[ \frac {\pi \, \mathrm {csc} \left( {a\pi}/{2}\right)}{\left(a+2\right) b^{a+2}}  \left \{ \frac{\mathcal{F}_4}{(c_1 \rho_{r})^{\frac {a}{2}+1}} -\frac{\mathcal{F}_5}{(c_2 \rho_{r})^{\frac {a}{2}+1}} \, \right \} + \frac {\pi \,  \mathrm {sec}\left ({{a\pi }/{2}}\right)} {\left(a+1\right) b^{a+1}}  \left \{ \frac{\mathcal{F}_6 } {(c_1\rho_{r})^{\frac {a+1}{2}}} - \frac{\mathcal{F}_7}{(c_2\rho_{r})^{\frac {a+1}{2}}}  \right \} \right.\nonumber\\
&\left. \qquad \qquad \qquad \qquad \qquad \qquad \qquad \qquad +\,  \Gamma (a-1) \left \{  \mathcal{F}_8 - \mathcal{F}_9 \right \} + \left(a^2-a\right)\Gamma(a-1)\ln\left( \frac{c_1}{c_2} \right)   \right] \label{e15}
\end{align} 
\vspace{-0.2cm}
\hrule
\vspace{-0.3cm}
\end{figure*}
\begin{figure*}[!t]
\begin{align}
R^u_{DR1} = & \log_2 \left[ 1 + \Xi_1\,{\Omega_{h_1}} + M_1\,\Xi_1\, {\Omega_{m_0}\,\Omega_{m_1} }  + M_1\left(M_1-1\right)\,\Xi_1\, \frac{\Omega_{m_0}}{m_0} \, \left\{ \frac {\Gamma \left(m_0 +\frac{1}{2} \right)}{\Gamma \left(m_0 \right)} \right\}^2 \,\frac{\Omega_{m_1}} {m_1} \, \left\{ \frac{ \Gamma \left(m_1 +\frac{1}{2} \right) } { \Gamma \left(m_1 \right) } \right\}^2  \right. \nonumber\\
& \left. \qquad \qquad\qquad \qquad\qquad  +\, 2\, M_{1}\, \Xi_1\, \frac{\Gamma(m_{h_1}+\frac{1}{2})}{\Gamma(m_{h_1})} \, \sqrt{\frac{\Omega_{m_{h_1}}}{m_{h_1}}}\,  \frac{\Gamma(m_0+\frac{1}{2})}{\Gamma(m_0)}\, \sqrt{\frac{\Omega_{m_0}}{ m_0}} \, \frac{\Gamma(m_1+\frac{1}{2})}{\Gamma(m_1)}\,  \sqrt{\frac{\Omega_{m_1}}{m_1}} \right] \label{e16} 
\end{align} 
\vspace{-0.2cm}
\hrule
\vspace{-0.3cm}
\end{figure*}
\begin{figure*}[!t]
\begin{align}
R^u_{DR2}  & = \log_2 \left[ 1 + \Xi_2 {\Omega_{h_2}} + M_2 \Xi_2  {\Omega_{m_0} \Omega_{m_2} }  + \frac {M_2\left(M_2-1\right) \Xi_2    \Omega_{m_0} \left\{  \Gamma \left(m_0 +\frac{1}{2} \right) \right\}^2 \Omega_{m_2} \left\{  \Gamma \left( m_2 + \frac{1}{2} \right) \right\}^2 }{ m_0  \left \{  \Gamma \left(m_0\right) \right\}^2 m_2  \left\{  \Gamma \left(m_2 \right) \right\}^2 } \right. \nonumber\\
&~~~ \left.  + \frac { 2  M_{2}  \Xi_2   \Gamma(m_{h_2}+\frac{1}{2})  \Gamma(m_0+\frac{1}{2})   \Gamma(m_2+\frac{1}{2})  \sqrt{\Omega_{m_{h_2}}\Omega_{m_0} \Omega_{m_2}}  } {  \Gamma(m_{h_2})   \Gamma(m_0)  \Gamma(m_2)  \sqrt{m_{h_2}m_0m_2}} \right] - \log_2 \left[ 1 + \Xi_3 {\Omega_{h_2}} + M_2 \Xi_3  {\Omega_{m_0} \Omega_{m_2} } \right. \nonumber\\
& \left. \hspace{-1cm}  +  \frac {M_2\left(M_2\!-\!1\right) \Xi_3   \Omega_{m_0}\! \left\{  \Gamma\! \left(m_0 \!+\!\frac{1}{2} \right)\! \right\}^2 \Omega_{m_2}\! \left\{  \Gamma\! \left( m_2\! +\! \frac{1}{2} \right)\! \right\}^2 }{ m_0  \left \{  \Gamma \left(m_0\right) \right\}^2 m_2  \left\{  \Gamma \left(m_2 \right) \right\}^2 } \! +\!  \frac {2  M_{2}  \Xi_3  \Gamma(m_{h_2}\!+\!\frac{1}{2})  \Gamma(m_0\!+\!\frac{1}{2})   \Gamma(m_2\!+\!\frac{1}{2})  \sqrt{\Omega_{m_{h_2}}\!\Omega_{m_0}\! \Omega_{m_2}}} { \Gamma(m_{h_2})   \Gamma(m_0)  \Gamma(m_2)   \sqrt{m_{h_2}m_0m_2}} \right] \label{e17} 
\end{align} 
\vspace{-0.15cm}
\hrule
\vspace{-0.3cm}
\end{figure*}

\subsection{Ergodic Rate}
The ergodic rates for DR1 and DR2 can be formulated as
    $R_{DR1} = \mathbb{E}\left[\log_2 \left( 1+ SINR_{DR1}\right)\right]$ and
    $R_{DR2} = \mathbb{E}\left[\log_2 \left( 1+ SINR_{DR2}\right)\right]$, respectively.
Since the channel gain's exact distribution is unknown, the expectations are mathematically intractable, and thus a closed-form expression may not be derived. Hence, we resort to approximating the ergodic rates of DRs with tight upper and lower bounds. Specifically, the upper bound is derived by invoking Jensen's inequality, and the lower bound is derived by utilizing the approximate PDF as described in \eqref{pdf1}. 
The bounds on the ergodic rate of DRs are evaluated following a series of mathematical manipulations. The following Lemmas present the upper and lower bound for the proposed RIS-enabled NOMA-based D2D communication system.

\begin{lemma}
The lower bound on the ergodic rates of DR1 and DR2 can be expressed as in \eqref{e14} and \eqref{e15}, shown on the top of the next page.
\end{lemma}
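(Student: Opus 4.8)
The plan is to prove \eqref{e14} by evaluating the ergodic rate $\mathbb{E}[\log_2(1+SINR_{DR1})]$ under the surrogate gain distribution \eqref{pdf1}, and then to deduce \eqref{e15} from \eqref{e14} by a short algebraic rewriting of $SINR_{DR2}$. As a preliminary remark, $SINR_{DR1}=\beta_1^2\rho_r|\mathcal{H}_1|^2$ and (since $\frac{d}{du}\frac{\beta_2^2\rho_r u}{\beta_1^2\rho_r u+1}=\frac{\beta_2^2\rho_r}{(\beta_1^2\rho_r u+1)^2}>0$) $SINR_{DR2}$ as well are monotone increasing in the channel gain, so replacing the exact gain $\big(|h_l|+\sum_i|\bar{g}_l^i||g_l^i|\big)^2$ by the direct-link-free surrogate $\big(\sum_i|\bar{g}_l^i||g_l^i|\big)^2$ with PDF \eqref{pdf1} can only decrease the ergodic rate; computing that decreased quantity is exactly what produces the claimed lower bounds.

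\textbf{Step 1 (DR1 --- reduction to a Laplace-type integral).} Using \eqref{snr_dr11} and \eqref{pdf1} with $(a,b)$ specialized to $N=M_1$, $l=1$, and substituting $t=\sqrt{y}$, the rate becomes $R^{l}_{DR1}=\frac{1}{\ln 2\,b^{a+1}\Gamma(a+1)}\int_0^\infty t^{a+1}e^{-t/b}\ln(1+\beta_1^2\rho_r t^2)\,dt$. Since $a$ is a non-integer combination of Gamma functions, this integral has no elementary primitive, so I would write the logarithm as a Meijer $G$-function, $\ln(1+z)=G^{1,2}_{2,2}(z\,|\,1,1;1,0)$ with $z=\beta_1^2\rho_r t^2$, replace it by its Mellin--Barnes contour integral, and interchange the order of integration (legitimate by absolute convergence on a suitable vertical contour). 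The inner $t$-integral is then the elementary Gamma integral $\int_0^\infty t^{a+1-2s}e^{-t/b}\,dt=b^{a+2-2s}\Gamma(a+2-2s)$, leaving a single contour integral in $s$ whose integrand is a ratio of Gamma functions (from $\Gamma(1+s)$, $\Gamma(-s)^2$, $\Gamma(1-s)$ and $\Gamma(a+2-2s)$) times $(b^2\beta_1^2\rho_r)^{-s}$.

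\textbf{Step 2 (DR1 --- residue sum).} Closing the contour on the appropriate side and summing residues gives \eqref{e14}. The poles generated by $\Gamma(a+2-2s)$ are spaced by $\tfrac12$ in $s$ and split by parity into two subseries; invoking the reflection formula $\Gamma(z)\Gamma(1-z)=\pi/\sin\pi z$ turns these into the $\pi\csc(a\pi/2)$ and $\pi\sec(a\pi/2)$ terms of \eqref{e14}, carrying the powers $(\beta_1 b)^{-(a+2)}\rho_r^{-(a/2+1)}$ and $(\beta_1 b)^{-(a+1)}\rho_r^{-(a+1)/2}$ together with the Gamma-product factor $\mathcal{F}_1$. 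The double pole at $s=0$ is the delicate part: expanding to first order one finds its residue proportional to $\Gamma(a+2)\big[-2\psi^{(0)}(a+2)-\ln(b^2\beta_1^2\rho_r)\big]$, and, after rewriting $\psi^{(0)}(a+2)=\psi^{(0)}(a+1)+\tfrac{1}{a+1}$ and $\Gamma(a+2)=(a+1)\,\Gamma(a+1)=(a+1)a(a-1)\Gamma(a-1)$, this delivers the $2a(a-1)\psi^{(0)}(a+1)$ and $(a^2-a)\Gamma(a-1)\ln(b^2\beta_1^2\rho_r)$ terms; the remaining simple poles at $s=1,2,\dots$ sum to the series abbreviated $\Gamma(a-1)\mathcal{F}_3$, where $\mathcal{F}_1,\mathcal{F}_3$ just collect the resulting Gamma-function products and ${}_pF_q$-type remainder series. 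Dividing through by $\ln 2\,\Gamma(a+1)$ gives \eqref{e14}; justifying the interchange of summation and integration and handling the second-order pole correctly is where the real work lies.

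\textbf{Step 3 (DR2).} For DR2 I would first use the NOMA power constraint $\beta_1^2+\beta_2^2=1$ in \eqref{snrr_dr2} to write $1+SINR_{DR2}=\frac{\rho_r|\mathcal{H}_2|^2+1}{\beta_1^2\rho_r|\mathcal{H}_2|^2+1}$, hence $\log_2(1+SINR_{DR2})=\log_2(1+c_1\rho_r|\mathcal{H}_2|^2)-\log_2(1+c_2\rho_r|\mathcal{H}_2|^2)$ with the effective coefficients $c_1=1$ and $c_2=\beta_1^2$. Taking expectations termwise under \eqref{pdf1} specialized to $N=M_2$, $l=2$, each term is precisely the DR1 integral of Steps 1--2 with $\beta_1^2\rho_r$ replaced by $c_1\rho_r$ and $c_2\rho_r$, respectively (the parameters $a,b$ being identical in both, as they depend only on $M_2,m_2$). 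Subtracting the two copies: the $\csc$ and $\sec$ blocks become the bracketed differences $\{\mathcal{F}_4/(c_1\rho_r)^{a/2+1}-\mathcal{F}_5/(c_2\rho_r)^{a/2+1}\}$ and $\{\mathcal{F}_6/(c_1\rho_r)^{(a+1)/2}-\mathcal{F}_7/(c_2\rho_r)^{(a+1)/2}\}$; the remainder series become $\Gamma(a-1)\{\mathcal{F}_8-\mathcal{F}_9\}$; the $\psi^{(0)}(a+1)$ contributions cancel (same $a$); and the difference of the two logarithmic terms collapses to $(a^2-a)\Gamma(a-1)\big[\ln(b^2c_1\rho_r)-\ln(b^2c_2\rho_r)\big]=(a^2-a)\Gamma(a-1)\ln(c_1/c_2)$, which is \eqref{e15}. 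Here $\mathcal{F}_4,\dots,\mathcal{F}_9$ are the Step-2 objects $\mathcal{F}_1,\mathcal{F}_3$ instantiated at the coefficients $c_1,c_2$. Once the DR1 residue computation is in place, Step 3 is purely algebraic, so the main obstacle throughout is the double-pole evaluation in Step 2.
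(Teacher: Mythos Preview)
Your proposal is correct and arrives at the same closed forms, but the route differs from the paper's in one respect worth noting. The paper writes $\ln(1+z)$ via the Gauss hypergeometric identity $\ln t=(t-1)\,{}_2F_1(1,1;2;1-t)$ and then invokes an external integral result (Theorem~3 of \cite{samuh2020performance}) to evaluate $\int_0^\infty g^{1/2}\,{}_2F_1(1,1;2;-g)\,e^{-\sqrt{g}/b}\,dg$ in one stroke; you instead substitute $t=\sqrt{y}$, represent $\ln(1+z)$ as a Meijer $G$-function, pass to the Mellin--Barnes contour, and compute the residues yourself. These are two presentations of the same underlying computation --- the cited theorem is itself proved by exactly this kind of contour argument --- so the difference is one of packaging: the paper is shorter but relies on an outside reference, while your version is self-contained and, usefully, explains where each of the $\csc(a\pi/2)$, $\sec(a\pi/2)$, $\psi^{(0)}(a+1)$, $\ln(b^2\beta_1^2\rho_r)$ and ${}_pF_q$ pieces in \eqref{e14} actually comes from (parity-split Gamma poles, the double pole at $s=0$, and the residual simple-pole series, respectively). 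Your Step~3 matches the paper's treatment of DR2 exactly: both split $\log_2(1+SINR_{DR2})$ as the difference $\log_2(1+c_1\rho_r|\mathcal{H}_2|^2)-\log_2(1+c_2\rho_r|\mathcal{H}_2|^2)$ with $c_1=\beta_1^2+\beta_2^2$ and $c_2=\beta_1^2$, and your observation that the $\psi^{(0)}$ contributions cancel in the difference is precisely why \eqref{e15} has no digamma term. Your opening monotonicity remark, justifying why the direct-link-free surrogate genuinely yields a \emph{lower} bound, is a point the paper leaves implicit.
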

\begin{proof}
The proof is presented in Appendix A.
\end{proof}

\begin{lemma}
Likewise, the upper bound on the ergodic rate of DR1 and DR2 can be expressed as in \eqref{e16} and \eqref{e17}, shown on the top of the page.
\end{lemma}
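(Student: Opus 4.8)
The plan is to obtain the bound with a single device: Jensen's inequality applied to the concave map $x\mapsto\log_2(1+x)$, after which the whole problem collapses to computing the second moment of the composite channel gain $|\mathcal{H}_l|^2$.

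First I would treat DR1. From \eqref{snr_dr11}, $R_{DR1}=\mathbb{E}\!\left[\log_2\!\left(1+\beta_1^2\rho_r|\mathcal{H}_1|^2\right)\right]$, and concavity of $\log_2(1+\cdot)$ gives, via Jensen, $R_{DR1}\le\log_2\!\left(1+\beta_1^2\rho_r\,\mathbb{E}[|\mathcal{H}_1|^2]\right)$, which identifies $\Xi_1=\beta_1^2\rho_r$. It then remains only to evaluate $\mathbb{E}[|\mathcal{H}_1|^2]$ from \eqref{ch_G1}. Writing $|\mathcal{H}_1|^2=\bigl(|h_1|+\sum_{i=1}^{M_1}|\bar g_1^i|\,|g_1^i|\bigr)^2$ and expanding the square, I would use the mutual independence of $h_1$, $\{\bar g_1^i\}$ and $\{g_1^i\}$ together with the fact that the RE terms are i.i.d.\ across $i$. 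The expectation then splits into four pieces: $\mathbb{E}[|h_1|^2]$; the diagonal contribution $M_1\,\mathbb{E}[|\bar g_1^i|^2]\,\mathbb{E}[|g_1^i|^2]$; the off-diagonal contribution $M_1(M_1-1)\bigl(\mathbb{E}[|\bar g_1^i|]\,\mathbb{E}[|g_1^i|]\bigr)^2$; and the direct-times-reflected cross term $2M_1\,\mathbb{E}[|h_1|]\,\mathbb{E}[|\bar g_1^i|]\,\mathbb{E}[|g_1^i|]$. Substituting the standard Nakagami-$m$ moments $\mathbb{E}[|X|]=\tfrac{\Gamma(m+1/2)}{\Gamma(m)}\sqrt{\Omega/m}$ and $\mathbb{E}[|X|^2]=\Omega$ — with parameters $(m_{h_1},\Omega_{h_1})$, $(m_0,\Omega_{m_0})$ and $(m_1,\Omega_{m_1})$ for $h_1$, $\bar g_1^i$ and $g_1^i$ respectively — reproduces term by term the bracket in \eqref{e16}. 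The first $\log_2[\cdot]$ of \eqref{e17} is the identical computation with the index $2$ and $M_2$ in place of $1$ and $M_1$.

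For DR2's full rate I would first use the NOMA constraint $\beta_1^2+\beta_2^2=1$ in \eqref{snrr_dr2} to rewrite $1+SINR_{DR2}=\dfrac{1+\rho_r|\mathcal{H}_2|^2}{1+\beta_1^2\rho_r|\mathcal{H}_2|^2}$, so that $R_{DR2}=\mathbb{E}[\phi(|\mathcal{H}_2|^2)]$ with $\phi(x)=\log_2(1+\rho_r x)-\log_2(1+\beta_1^2\rho_r x)$. A direct second-derivative check shows $\phi''(x)<0$ for $x\ge0$ — the inequality collapses to $\beta_1^2<1$, which holds for any valid NOMA split — so $\phi$ is concave and Jensen again yields $R_{DR2}\le\phi\bigl(\mathbb{E}[|\mathcal{H}_2|^2]\bigr)$. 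This is precisely \eqref{e17} with $\Xi_2=\rho_r$, $\Xi_3=\beta_1^2\rho_r$, and the two brackets filled in by the same second-moment expansion as above.

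I expect the main obstacle to be twofold: establishing the concavity of the difference of logarithms $\phi$, which is exactly what legitimizes calling the DR2 expression a genuine upper bound rather than a mere approximation; and the bookkeeping in the second-moment expansion — in particular keeping $\mathbb{E}[|\bar g^i|\,|g^i|]=\mathbb{E}[|\bar g^i|]\,\mathbb{E}[|g^i|]$ distinct from $\mathbb{E}[|\bar g^i|^2|g^i|^2]=\mathbb{E}[|\bar g^i|^2]\,\mathbb{E}[|g^i|^2]$ and getting the $M$ versus $M(M-1)$ counting of the double sum correct. Everything beyond those points is routine substitution of known Nakagami-$m$ moments.
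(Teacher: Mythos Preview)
Your proposal is correct and follows essentially the same route as the paper: Jensen's inequality, then a binomial expansion of $\mathbb{E}[|\mathcal{H}_l|^2]$ into the direct-link second moment, the diagonal sum $M_l\,\Omega_{m_0}\Omega_{m_l}$, the off-diagonal sum with the $M_l(M_l-1)$ count, and the cross term $2M_l\,\mathbb{E}|h_l|\,\mathbb{E}|\bar g^i|\,\mathbb{E}|g_l^i|$, followed by substitution of the standard Nakagami-$m$ moments.

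The one place you go beyond the paper is the DR2 bound. The paper simply \emph{declares} $R_{DR2}\le\log_2\!\bigl(1+\Xi_2\,\mathbb{E}[|\mathcal{H}_2|^2]\bigr)-\log_2\!\bigl(1+\Xi_3\,\mathbb{E}[|\mathcal{H}_2|^2]\bigr)$ without justification; taken at face value this would be Jensen applied separately to each logarithm, i.e.\ an upper bound minus an upper bound, which proves nothing. You instead verify that the difference $\phi(x)=\log_2(1+\Xi_2 x)-\log_2(1+\Xi_3 x)$ is itself concave (reducing to $\Xi_3<\Xi_2$, i.e.\ $\beta_1^2<1$), so that a single application of Jensen to $\phi$ yields a genuine upper bound. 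That is the correct argument and fills a gap the paper leaves open.
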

\begin{proof}
The proof is presented in Appendix B.
\end{proof}

\subsection{SE and EE}
Based on the ergodic rate established in the preceding subsection, SE of RIS-enabled NOMA-based D2D can be described as $SE = R_{DR1} + R_{DR2}$. 
Similarly, the EE can be defined as the ratio of the SE to the total power utilized, $P_{\rm tot}$, in bits/Joule/Hz. $P_{\rm tot}$ consists of the power utilized by the BS, DT, RIS, and DRs. Thus, the EE may be represented as 
\begin{align}
EE =\, \frac{SE}{P_{\rm tot}}=\, \frac{SE}{(1+ \alpha) P_r + M P_{RE} +  2P_U},
\label{ee2}
\end{align}
where $P_{r}$ denotes the static power consumption of DT. Likewise, $\alpha P_r$ is the dynamic power consumption at DT. Further, $P_{RE}$ denotes the power consumed by each of the RE and $P_U$ is the power utilized by DR.

\begin{figure}[!t]
\centering
\includegraphics[width=7cm, height=5cm]{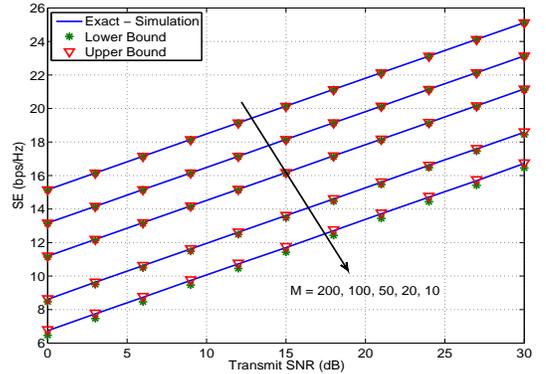} 
\caption{SE versus transmit SNR with varying $M$.}
\label{fig:SE1}
\vspace{-0.2cm}
\end{figure}

\section{Simulation Result}
This section presents the simulation and analytical results for the proposed RIS-empowered NOMA-based D2D communication system. For the direct links, the Nakagami fading parameters are assumed to be $m_{h_1} = m_{h_2} = 2$ and, for the RIS reflected links, $m_0 = m_1 = m_2 = 5$, respectively. Likewise, the power allocation factor for DR1 is $\beta ^2_1 = 0.3$ while for DR2 is $\beta ^2_2 = 0.7$, if not specified otherwise. Additionally, the value of RE allocation parameter $\eta$ is assumed to be $0.5$.

Fig. \ref{fig:SE1} shows the SE results for the proposed RIS-empowered NOMA-based D2D communication. Specifically, it shows SE with respect to the transmit power while comparing the simulation and analytical results. These results can easily infer the following observations: 1) Apart from smaller $M$, analytical SE is quite precise compared to simulation-based SE. 
2) Due to the multiplicative path-loss, for less number of REs, i.e., smaller $M$, the received power from the direct link is significant. However, as the number of REs increases, the received power from a RIS-reflected link is much more than the power received from the direct link to the extent that it can safely be ignored. Thus, it can easily be inferred from the analytical and simulation framework that the received signal power from the direct link is relatively insignificant and can be ignored as compared to the received power from the RIS-reflected link. 

\begin{figure}[!t]
\centering
\includegraphics[width=7 cm, height = 5 cm]{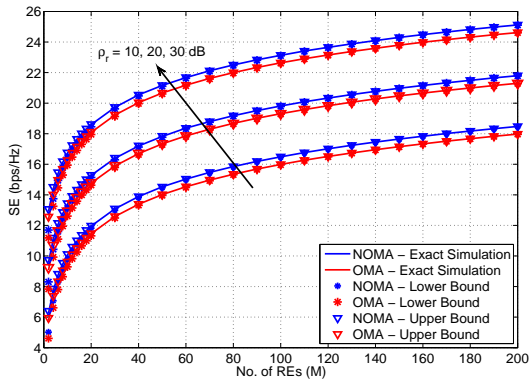}
\caption{SE of NOMA-/OMA-based D2D with respect to $M$.}
\label{fig:SE2}
\vspace{-0.3cm}
\end{figure}
\begin{figure}[!t]
	\centering
	\includegraphics[width=7 cm, height = 5 cm]{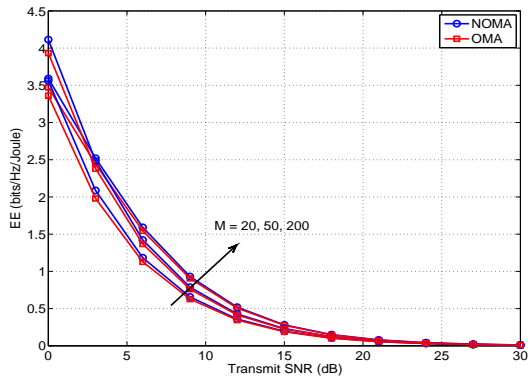}
	\caption{EE versus transmit SNR at different $M$.}
	\label{fig:EE1}
	\vspace{-0.5cm}
\end{figure}

Fig. \ref{fig:SE2} shows the SE of the proposed RIS-empowered D2D communication system for both NOMA and OMA scenarios. Specifically, the SE of both NOMA and OMA scenarios is plotted with respect to the number of REs for different SNRs. It can easily be observed here that, instead of increasing the transmit power, the number of REs can be increased to get the same SE. 
In other words, for fixed required SE, we can tradeoff the transmit power with the number of REs. As D2D users are usually power constrained, RIS-empowered D2D communication can be a viable alternative to cut down the transmit power and improve network EE. 
Further, as evident from the result, RIS-empowered NOMA-based D2D is more spectrally efficient as compared to OMA-based D2D. 
For instance, SE at $20$ dB SNR and $M=20$ is $15.26$ bps/Hz for NOMA and 14.76 bps/Hz for OMA, respectively. 
The NOMA gain will increase when the channel gain between UEs increases. Further, SE also improves with the number of REs, as evident from the result. Likewise, SE grows as the transmit power goes up.

Fig. \ref{fig:EE1} shows EE of the proposed RIS-empowered D2D communication system, where EE of both NOMA and OMA scenarios is plotted with respect to the transmit power for the varying number of REs. Further, it can be inferred that RIS-empowered D2D is energy-efficient as compared to OMA. Further, it can also be observed that the EE increases with the number of REs, whereas EE decreases as the transmit power increases. 
This is because SE increases linearly while the transmit power increases logarithmically; thus, the overall compounding impact decreases EE while increasing transmit SNR. 
In addition, EE is likewise saturated for a large number of REs, and no further gains are observed.
The result also demonstrates that increasing the number of REs does not improve performance, as SE increases while EE becomes saturated.
Thus, it can be inferred that RIS improves the SE and EE performance of the D2D system.



\section{Conclusion}

In this paper, we investigated the performance of a RIS-empowered NOMA-based D2D communication system. Specifically, we derived the closed-form expressions for SE's upper and lower bounds. As shown through the results, apart from the smaller values of the number of REs, the bounds are pretty tight and converge to exact SE, even for moderate REs. Further, we have also investigated the EE performance. Since the D2D devices are usually power-constrained, the results show that the transmit power can be a tradeoff with the number of REs at RIS. Additionally, the results are also compared with the OMA scenario, where it has been shown that NOMA-based D2D outperforms the OMA-based case.  

\section{Acknowledgement}
This work was supported by the Nazarbayev University CRP Grant no. 11022021CRP1513.

\appendices
\section{Proof of Lemma 1}
The ergodic rate of DR1 can be formulated as 
\begin{equation}
R_{DR1} =\frac {1}{\ln\left(2\right)} \int \limits _{0}^\infty \underset{\mathcal{J}_1}{\underbrace{{\ln \left(1+ \beta_1^2\left |y \right |^2 \rho_r\right)} f_{\left |H\right|^2} (y) \mathrm{d}y}}.
\label{ap_R1_2}
\end{equation}
Further, using \eqref{pdf1} and the below relation 
$\ln (t) = (t-1) \,_{2} F_{1} \left ({1, 1;2; 1-t}\right)$,
$\mathcal{J}_1$ in \eqref{ap_R1_2} can be modified as 
\begin{align}
\mathcal{J}_1 = \frac{1}{\Gamma(a+1)} \int \limits _{0}^\infty \frac{\sqrt{g}\,_{2}F_{1}\left ({1, 1;2; -g}\right) }{2 b^{a+1}}
e^{\left(-\frac{\sqrt{g}}{b}\right)}
\mathrm{d} g.
\end{align}
Here, $\,_{2}F_{1}(\cdot,\cdot;\cdot;\cdot)$ represents the Gauss hyper-geometric function. Now, this $\mathcal{J}_1$ can be solved utilizing \cite[Theorem 3]{samuh2020performance} and substituted in \eqref{ap_R1_2}. After rearranging the terms, the ergodic rate for DR1 can be given as shown in \eqref{e15}, where $\mathcal{F}_1\! =\! {}_{1}F_{2}\left( {1\!+\!\frac {a}{2}; \frac {3}{2}, 2\!+\!\frac {a}{2}; \frac {-1}{4b^{2} \beta_1^2\rho_{r}} }\right)$, 
$\mathcal{F}_2\! =\! {}_{1}F_{2}\left ({\frac {a\!+\!1}{2};\frac {1}{2},\frac {a\!+\!3}{2};  \frac {-1}{4  b^{2} \beta_1^2  \rho_{r}}}\right)$ and
$\mathcal{F}_3 \!=\!  {}_{2}F_{3}\left ({1,1; 2, 1\!-\!\frac {a}{2}, \frac {3-a}{2};   \frac {-1}{4  b^{2}  \beta_1^2  \rho_{r}}}\right)$

Likewise, the ergodic rate for DR2, $R_{DR2}$, can be given by  
\begin{align}
{R_{DR2}} =& \frac {1}{\ln \left ({2 }\right)} \left\{\int \limits _{0}^\infty {\ln{\left(1+c_1 \left |y \right |^2 \rho_r\right)}} f_{\left |\mathcal{H}\right|^2}(y)\mathrm {d}y \nonumber \right. 
\\ & \qquad  -  \left.  \int \limits _{0}^\infty {\ln{\left(1+c_2\left |y \right |^2 \rho_r\right)}} f_{\left |\mathcal{H}\right|^2}(y)\mathrm {d}y \right\},
\end{align}
where $c_1$ and $c_2$ are defined as $c_1 = \beta_1^2 + \beta_2^2$ and $c_2 = \beta _1^2$.
It can be evaluated similarly to $\mathcal{J}_1$. After rearranging the terms, the ergodic rate for DR2 can be given as shown in \eqref{e16}, where $\mathcal{F}_4 \!=\! {}_{1}F_{2}\left( {1\!+\!\frac {a}{2}; \frac {3}{2}, 2\!+\!\frac {a}{2};  \frac {-1}{4 b^{2}  c_1 \rho_{r}} }\right)$,
$\mathcal{F}_5 \!=\!{}_{1}F_{2}\left ({1\!+\!\frac {a}{2}; \frac {3}{2}, 2\!+\! \frac {a}{2};  \frac {-1}{4 b^{2}  c_2  \rho_{r}} }\right)$,
$\mathcal{F}_6\! =\!{}_{1}F_{2}\left ({\frac {a+1}{2};\frac {1}{2},\frac {a+3}{2};  \frac {-1}{4  b^{2}  c_1  \rho_{r}}}\right)$,
$\mathcal{F}_7\! =\!{}_{1}F_{2}\left ({\frac {a+1}{2}; \frac {1}{2}, \frac {a+3}{2};  \frac {-1}{4  b^{2}  c_2  \rho_{r}}} \right)$,
$\mathcal{F}_8\! =\! {}_{2}F_{3}\left ({1, 1;  2,  1\!-\!\frac {a}{2},  \frac {3-a}{2};   \frac {-1}{4  b^{2}  c_1  \rho_{r}}}\right)$ and
$\mathcal{F}_9 \!=\! {}_{2}F_{3}\left ({1,  1;  2,  1\!-\!\frac {a}{2},  \frac {3-a}{2};    \frac {-1}{4  b^{2}  c_2  \rho_{r}}}\right)$.
This completes the proof of Lemma 1.

\section{Proof of Lemma 2}
Applying Jensen's inequality, we define the upper bound for DR1 as ${R}^{u}_{DR1}$, where 
$R_{DR1} \leq {R}^{u}_{DR1}$, with $\Xi_1 = \beta_1^2 \rho_r$, as
\begin{align}
{R}^{u}_{DR1} =  \log_2 \left( 1 + \Xi_1 \mathbb{E} \left[\left |\mathcal{H}_1\right |^2\right] \right).
\label{eqn_R_ub_2}
\end{align}
To calculate $\mathbb{E}\left[  \left |\mathcal{H}_1\right |^2 \right]$, we apply the binomial expansion theorem (BET) as 
\begin{align}
&\mathbb{E}\left[ \left |\mathcal{H}_1\right |^2 \right] = 
\mathbb{E}\left[ \left |  h_1 + \,\sum _{i=1}^{M_1} \left|\bar{g}_1^i\right| \left|g_1^i\right| \right |^2 \right] 
= \underbrace { \mathbb {E} \left \{{\left| h_1 \right|^{2} }\right \} }_{\mathcal{E}_{1}} 
\nonumber \\ 
&   + \, \underbrace { \mathbb {E} \left \{{ \left (\sum _{i=1}^{M_1} \left|\bar{g}_1^i\right| \left|g_1^i\right| \right)^{2} }\right \} }_{\mathcal{E}_{2}} + 2 \, \underbrace {\mathbb {E}\left\{{ \sum _{i=1}^{M_1} \left|\bar{g}_1^i\right| \left|g_1^i\right|\left|h_1\right|}\right \} }_{\mathcal{E}_{3}}. 
\end{align}

Now, we have $\mathcal{E}_{1} = \Omega_{h_1}$. Likewise, to calculate $\mathcal{E}_{2}$, we apply BET again; thus, on expanding, $\mathcal{E}_{2}$ can be expressed as
\begin{align}
\mathbb {E}\left \{{ \sum _{i=1}^{M_1} \left|\bar{g}_1^i\right|^2 \left|g_1^i\right|^{2}}\right \}  +\, \mathbb {E}\left \{{\sum _{i=1}^{M_1} \sum _{\substack {i=1 \\ j\ne i}}^{M_1} \left|\bar{g}_1^i\right| \left|g_1^i\right|\left|\bar{g}_1^j\right| \left|g_1^j\right|}\right \}, \end{align}
where $\mathbb {E}\left \{ \sum _{i=1}^{M_1} \left|\bar{g}_1^i\right|^{2} \left|{g}_{1}^i\right|^{2}\right \} = {M}_1\Omega_{m_0}\Omega_{m_1}$. 
Further, for $\mathbb {E}\left \{{\sum _{i=1}^{M_1} \sum _{\substack {i=1 \\ j\ne i}}^{M_1} \left|\bar{g}_1^i\right| \left|g_1^i\right|\left|\bar{g}_1^j\right| \left|g_1^j\right|}\right \}$, the expected value of a Nakagami-$m$ variable can be given as
$\mathbb {E}\{\left|g_{1}\right|\} =\frac {\Gamma(m_1+\frac{1}{2})}{\Gamma(m_1)} \sqrt{\left(\frac{\Omega_{m_1}}{m_1}\right)}$.
Since $\bar{g}_1^i$ and $g_1^i$ are mutually independent, we can have 
{\allowdisplaybreaks
\begin{align}
\mathbb {E}&\left\{{\sum _{i=1}^{M_1} \sum_{\substack {j=1 \\ j\ne i}}^{M_1} \left|{\bar{g}}_{1}^i\right| \left|{g}_{1}^i\right| \left|\bar{g}_{1}^j\right| \left| g_{2}^j\right|}\right\} = M_1\left(M_1-1\right) \left\{\frac{\Omega_{m_0}}{m_0}\right\} 
\nonumber\\
&\times \frac{ \left \{\Gamma\left(m_0+\frac{1}{2}\right)\right\}^2 \left \{ \Gamma \left( m_1+\frac{1}{2} \right)\right\}^2 }{\left \{\Gamma\left(m_0\right)\right\}^2 \left \{\Gamma\left(m_1\right)\right\}^2} 
\left\{\frac{\Omega_{m_1}}{m_1}\right\}. 
\end{align}}
Likewise, $\mathcal{E}_{3}$ can be calculated as
\begin{align} 
&\mathcal{E}_{3} = M\, \Gamma\left(m_{h_1}+\frac{1}{2}\right)\Gamma\left(m_0+\frac{1}{2}\right) \Gamma\left(m_1+\frac{1}{2}\right)
\nonumber \\ 
&\times \sqrt{\frac{\Omega_{m_{h_1}} \Omega_{m_0} \Omega_{m_1}} {m_{h_1} m_0 m_1}} / 
\left[
\Gamma\left(m_{h_1}\right) \Gamma\left(m_0\right) \Gamma\left(m_1\right)
\right]. 
\end{align}

Finally, putting $\mathcal{E}_{1}$, $\mathcal{E}_{2}$ and $\mathcal{E}_{3}$
all together yields $\mathbb{E}\left[  \left |\mathcal{H}_1\right |^2 \right]$ 
which can be put in \eqref{eqn_R_ub_2} to give the desired upper bound as shown in (27).

Likewise, the upper bound on the ergodic rate of DR2, ${R}^{u}_{DR2}$ can be defined as 
$R_{DR2} \leq {R}^{u}_{DR2},$    
where $R_{DR2}$ can be defined as 
\begin{align}
{R}_{DR2} &= \mathbb{E}\left[\log_2 \left( \frac{1+ \left |\mathcal{H}_2\right |^2 \left(\beta_1^2 + \beta_2^2 \right) \,\rho_r} {1 + \left |\mathcal{H}_2 \right |^2 \beta_{1}^2 \rho_r}\right)\right], \nonumber \\
&= \mathbb{E}\left[ \log_2 \left( 1 + \Xi_2 \left |\mathcal{H}_2\right |^2\right) - \log_2 \left( 1 + \Xi_3 \left |\mathcal{H}_2\right |^2 \right) \right], 
\label{eqn_Rn_4}
\end{align}
where $\Xi_2 = \left(\beta_1^2 +\beta_2^2 \right) \rho_r$ and $\Xi_3 = \beta_1^2 \rho_r$. Thus, ${R}^{u}_{DR2}$ can be defined as 
\begin{align}
{R}^{u}_{DR2} \hspace{-0.1cm} = \hspace{-0.1cm} \log_2 \left( 1 \hspace{-0.05cm} + \hspace{-0.05cm} \Xi_2 \mathbb{E} \left[\left |\mathcal{H}_2\right |^2\right] \right) \hspace{-0.05cm} - \hspace{-0.05cm} \log_2 \left( 1 \hspace{-0.05cm} + \hspace{-0.05cm} \Xi_3 \mathbb{E} \left[\left |\mathcal{H}_2\right |^2\right] \right).
\label{eqn_R_ub_3}
\end{align}
Similar to $\mathbb{E} \left[\left |\mathcal{H}_1\right |^2\right]$, $\mathbb{E} \left[\left |\mathcal{H}_2\right |^2\right]$ can be evaluated. After substituting and rearranging the terms, ${R}^{u}_{DR2}$ is given in \eqref{e17}.

\bibliographystyle{IEEEtran}
\bibliography{Bibil}

\begin{thebibliography}{10}
\providecommand{\url}[1]{#1}
\csname url@samestyle\endcsname
\providecommand{\newblock}{\relax}
\providecommand{\bibinfo}[2]{#2}
\providecommand{\BIBentrySTDinterwordspacing}{\spaceskip=0pt\relax}
\providecommand{\BIBentryALTinterwordstretchfactor}{4}
\providecommand{\BIBentryALTinterwordspacing}{\spaceskip=\fontdimen2\font plus
\BIBentryALTinterwordstretchfactor\fontdimen3\font minus
  \fontdimen4\font\relax}
\providecommand{\BIBforeignlanguage}[2]{{%
\expandafter\ifx\csname l@#1\endcsname\relax
\typeout{** WARNING: IEEEtran.bst: No hyphenation pattern has been}%
\typeout{** loaded for the language `#1'. Using the pattern for}%
\typeout{** the default language instead.}%
\else
\language=\csname l@#1\endcsname
\fi
#2}}
\providecommand{\BIBdecl}{\relax}
\BIBdecl

\bibitem{basar2019wireless}
E.~Basar, M.~Di~Renzo, J.~De~Rosny, M.~Debbah, M.-S. Alouini, and R.~Zhang,
  ``Wireless communications through reconfigurable intelligent surfaces,''
  \emph{IEEE Access}, vol.~7, pp. 116\,753--116\,773, 2019.

\bibitem{8319526}
S.~{Hu}, F.~{Rusek}, and O.~{Edfors}, ``Beyond massive {MIMO}: The potential of
  data transmission with large intelligent surfaces,'' \emph{IEEE Trans. Signal
  Process.}, vol.~66, no.~10, pp. 2746--2758, 2018.

\bibitem{6805125}
A.~Asadi, Q.~Wang, and V.~Mancuso, ``A survey on device-to-device communication
  in cellular networks,'' \emph{IEEE Commun. Surv. Tutor.}, vol.~16, no.~4, pp.
  1801--1819, 2014.

\bibitem{9549117}
C.~Zhang \emph{et~al.}, ``Distributed intelligent reflecting surfaces-aided
  device-to-device communications system,'' \emph{J. Commun. Inf. Netw.},
  vol.~6, no.~3, pp. 197--207, Sept. 2021.

\bibitem{9519722}
Y.~Cao \emph{et~al.}, ``Sum-rate maximization for multi-reconfigurable
  intelligent surface-assisted device-to-device communications,'' \emph{IEEE
  Trans. Commun.}, vol.~69, no.~11, pp. 7283--7296, Nov. 2021.

\bibitem{9507508}
G.~Yang \emph{et~al.}, ``Reconfigurable intelligent surface empowered
  device-to-device communication underlaying cellular networks,'' \emph{IEEE
  Trans. Commun.}, vol.~69, no.~11, pp. 7790--7805, Nov. 2021.

\bibitem{9424402}
Y.~Ni \emph{et~al.}, ``Performance analysis for {RIS}-assisted {D2D}
  communication under nakagami-$m$ fading,'' \emph{IEEE Trans. Veh. Technol.},
  vol.~70, no.~6, pp. 5865--5879, Jun. 2021.

\bibitem{9345507}
Y.~Cheng \emph{et~al.}, ``Downlink and uplink intelligent reflecting surface
  aided networks: {NOMA} and {OMA},'' \emph{IEEE Trans. Wirel. Commun.}, pp.
  1--1, 2021.

\bibitem{9240028}
J.~Zhu \emph{et~al.}, ``Power efficient {IRS}-assisted {NOMA},'' \emph{IEEE
  Trans. Commun.}, vol.~69, no.~2, pp. 900--913, 2021.

\bibitem{9000593}
Z.~{Ding} and H.~{Vincent Poor}, ``A simple design of {IRS-NOMA}
  transmission,'' \emph{IEEE Commun. Lett.}, vol.~24, no.~5, pp. 1119--1123,
  2020.

\bibitem{samuh2020performance}
M.~H. Samuh, A.~M. Salhab, and A.~H.~A. El-Malek, ``Performance analysis and
  optimization of {RIS}-assisted networks in {Nakagami}-m environment,''
  \emph{arXiv preprint arXiv:2010.07841}, 2020.

\end{thebibliography}
\end{document}